\DeclareMathOperator{\ad}{ad}
\newcommand{\rd}{\overleftarrow{\partial}} 
\newcommand{\ld}{\overrightarrow{\partial}} 
\newtheorem{definition}{Definition}[section]
\newtheorem{theorem}{Theorem}[section]
\newtheorem{lemma}{Lemma}[section]
\newtheorem{example}{Example}[section]
\begin{document}

\title{{Lectures on Deformation quantization of Poisson manifolds}}
\author{Chiara Esposito\\
\texttt{esposito@math.ku.dk}\\
\texttt{esposito.chiar@gmail.com}
}

\maketitle

\abstract{In these notes we review the Kontsevich's formality therem as discussed at the \textit{School of Geometry, University Kasdi Merbah (Ouargla) 2012}. They are essentially based on \cite{CI}, where the interested reader can find many details we are not going to discuss.}

\clearpage

\tableofcontents

\vfill
\section*{Acknowledgments}
We thank the hospitality 
at University Kasdi Merbah (Ouargla).
Thanks to Mohamed Amine Bahayou for the nice organization and for the support during our stay in Algeria.
We also want to thank all the participants to the school for 
the enthusiasm and all the nice questions that helped us to write these notes. A special thank to Peter Bongaart for his interests in these lectures and all the stimulating discussions we have had.

\clearpage 

\section{Introduction} 

The subject of these lecture notes is the theory of Kontsevich of deformation quantization of Poisson manifolds. We start a brief survey of tensor fields that will be useful to define Poisson manifolds. In Section \ref{sec:pm} we give an overview of physical motivations that led to the introduction of the star product and we summarize the steps in the theory of deformation quantization. Sections \ref{sec:dqp} and \ref{sec:ft} are devoted to Kontsevich's theory. First, we introduce the theory of classification of star products that led to claim that any Poisson manifold admits a canonical deformation quantization. In order to prove this result we need to introduce a more general one, called Formality theorem. 

\section{Basic notions}

In this section we recall briefly the definition of tensor field that will be useful to introduce the Poisson manifolds.
A more complete discussion on tensors and all the basic theory about manifolds and vector bundles can be found in \cite{MAR}.  

Consider the set $L^k(V_1,\dots, V_k; W)$ of $k$-multilinear maps of $V_1\times\dots V_k$ to $W$. The special case $L(V, \mathbb{R})$ is denoted $V^*$, the dual space of $V$. If $V$ is finite dimensional and $\lbrace e_1, \dots e_n\rbrace$ is a basis of $V$, there is a unique basis of $V^*$, the dual basis $\lbrace f^1, \dots f^n\rbrace$, such that $\langle f^i,e_j\rangle= \delta^i_j$. Here $\langle \cdot,\cdot\rangle$ denotes the pairing between $V$ and $V^*$.
\begin{definition}
For a vector space $V$ we put
$$
T_s^r(V)=L^{s+r}(V^*,\dots, V^*, V,\dots, V; \mathbb{R})
$$
($r$ copies of $V^*$ and $s$ copies of $V$). Elements of $T_s^r(V)$ are called \textbf{tensors} on $V$, contravariant of order $r$ and covariant of order $s$.

Given $t\in T_s^r(V)$ and $s\in T_p^q(V)$, the \textbf{tensor product} of $t$ and $s$ is the tensor $t\otimes s\in T_{s+p}^{r+q}(V)$ defined by
\begin{align}
(t\otimes s)&(\beta^1, \dots \beta^r, \gamma^1, \dots, \gamma^q, f_1, \dots f_s,g_1, \dots, g_p)\\
&=t(\beta^1, \dots \beta^r,f_1, \dots f_s)s(\gamma^1, \dots, \gamma^q, g_1, \dots, g_p)
\end{align}
where $\beta^j, \gamma^j\in V^*$ and $f_j, g_j\in V$.
\end{definition}
The tensor product is associative, bilinear and continuous; it is not commutative. Notice that
$$
T^1_0(V)=V, \quad T^0_1(V)=V^*.
$$
Now we extend the tensor algebra to vector bundles.
\begin{definition}
Let $\pi: V\rightarrow B$ be a vector bundle with $V_b=\pi^{-1}(b)$ denoting the fiber over the point $b\in B$. Define
$$
T^r_s(V)=\bigcup_{b\in B}T^r_s(V_b)
$$
and $\pi_s ^r: T_s^r(V)\rightarrow B$ by $\pi_s^r(v)=b$, where $e\in T^r_s(V_b)$. Furthermore, for a given subset $A$ of $B$, we define 
$$
T^r_s(V)\vert_A=\bigcup_{b\in A}T^r_s(V_b).
$$
\end{definition}
Let us consider a smooth manifold $M$. We denote with $C^{\infty}(M)$ the set of all the smooth functions from $M$ to $\mathbb{R}$. We specialize to the case where $\pi: V\rightarrow B$ is the tangent vector bundle of $M$.

\begin{definition}
Let $M$ be a manifold and $\tau_M:TM\rightarrow M$ its tangent bundle. We call $T_s^r(M)=T_s^r(TM)$ the \textbf{vector bundle of tensors} contravariant order $r$ and covariant order $s$. We identify $T^1_0(M)$ with $TM$ and call $T_1^0(M)$ the cotangent bundle of $M$ also denoted by $\tau_M^*: T^*M\rightarrow M$. The zero section of $T_s^r(M)$ is identified with $M$.
\end{definition}

The smooth sections of $\pi: V\rightarrow B$ are denoted by $\Gamma(V)$.
A section of $T_s^r(M)$ takes an element $m\in M$ and associates a vector in the fiber, called tensor. Recall that the set of smooth functions $C^{\infty}(M)$ is endowed with a structure of ring, defined by
\begin{equation}
(f+g)(x)=f(x)+g(x), \quad (cf)(x)=c(f(x)), \quad (fg)(x)=f(x)g(x).
\end{equation}
Finally, recall that a vector field on $M$ is an element of $\Gamma(TM)$.

\begin{definition}
A \textbf{tensor field} of type $(r,s)$ on a manifold $M$ is a smooth section of $T_s^r(M)$. We denote by $\mathcal{T}_s^r(M)$ the set $\Gamma(T_s^r(M))$ together with its infinite dimensional real vector space structure. A covector field or differential one-form is an element of $\mathcal{T}_1^0(M)$. 
\end{definition}

\subsection{Poisson manifolds}

Now we introduce the definition of Poisson manifold, in terms of algebra and then in a more general way using the structures discussed in the previous section. This section is essentially based on the book of  I. Vaisman \cite{V}. We want to remark that Poisson manifolds have many geometrical properties and there is an incredible rich literature devoted on this topic. Here we just recall their definition and a simple example in order to discuss their deformation quantization.

\begin{definition}
 A \textbf{Poisson manifold} is a pair $(M,\left\lbrace \cdot,\cdot\right\rbrace )$, where $M$ is a smooth manifold and $\left\lbrace \cdot,\cdot\right\rbrace$ is a bilinear operation on $C^{\infty}(M)$, such that the pair $(C^{\infty}(M),\left\lbrace \cdot,\cdot\right\rbrace)$ is a Lie algebra and $\left\lbrace \cdot,\cdot\right\rbrace$ is a derivation in each argument. The pair $(C^{\infty}(M),\left\lbrace \cdot,\cdot\right\rbrace)$ is called Poisson algebra. 
\end{definition}
Let $\pi$ be a bivector field on a manifold $M$, i.e. a skew-symmetric, contravariant 2-tensor. At each point $m$, $\pi_m$ can be viewed as a skew-symmetric bilinear form on $T_m^*M$, or as the skew-symmetric linear map 
$\pi^{\sharp}_m: T_m^*M\rightarrow T_mM$, such that
\begin{equation}\label{eq: pish}
  \pi_m(\alpha_m,\beta_m)=\pi^{\sharp}(\alpha_m)(\beta_m), \quad \alpha_m,\beta_m\in T_m^*M.
\end{equation}
If $\alpha$, $\beta$ are 1-forms on $M$, we define $\pi(\alpha,\beta)$ to be the function in $C^{\infty}(M)$ whose value at $m$ is $\pi_m(\alpha_m,\beta_m)$. Given $f,g\in C^{\infty}(M)$ we set
\begin{equation}\label{eq: pbra}
\lbrace f,g\rbrace_m =  \pi_m(df,dg).
\end{equation}
The bracket induced by $\pi$ satisfies the Leibnitz rule.

\begin{definition}
 A \textbf{Poisson manifold} $(M,\pi)$ is a manifold $M$ with a Poisson bivector $\pi$ such that the bracket defined in eq. (\ref{eq: pbra}) satisfies the Jacobi identity.
\end{definition}

\begin{example}\label{r2}
 If $M=\mathbb{R}^{2n}$, with coordinates $(q^i, p_i)$, $i=1,\cdots, n$ and if
\begin{equation}
 \pi^{\sharp}(dq^i)=-\frac{\partial}{\partial q^i}, \quad  \pi^{\sharp}(dp_i)=-\frac{\partial}{\partial p_i},
\end{equation}
then
\begin{equation}
 X_f=\frac{\partial f}{\partial p_i}\frac{\partial}{\partial q^i}-\frac{\partial f}{\partial q^i}\frac{\partial}{\partial p_i}
\end{equation}
and
\begin{equation}
\lbrace f,g\rbrace=\frac{\partial f}{\partial p_i}\frac{\partial g}{\partial q^i}-\frac{\partial f}{\partial q^i}\frac{\partial g}{\partial p_i},
\end{equation}
is the standard Poisson bracket of functions on the phase space. The corresponding bivector is $\pi=\frac{\partial }{\partial p_i}\wedge \frac{\partial}{\partial q^i}$.
\end{example}

\section{Physical motivation}\label{sec:pm}
In this section we want to describe briefly the physical motivations that gave rise to the theory of deformation quantization. Essentially this theory want to give a precise mathematical formulation to the correspondence between classical and quantum mechanics. The first step is recalling the mathematical description of such theories and then we discuss the attempts to describe this correspondence. It is important to remark that a formal correspondence between the two theories is still missing, despite the fact that many progresses in that direction have been done.

A classical mechanical system in the hamiltonian formalism is described by the triple $(M,\lbrace\cdot,\cdot\rbrace, H)$ where $M$ is an even-dimensional manifold called phase space, $\lbrace\cdot,\cdot\rbrace$ is a Poisson bracket, induced equivalently by a symplectic or Poisson structure on $M$ and $H$ is a smooth function on $M$, called Hamiltonian. These three objects allow us to describe completely a given physical system. Indeed, a physical state of the system is represented by a point of $M$ and a physical observable corresponds to a smooth function $f$ on $M$. The Poisson bracket and the Hamiltonian function are necessary to describe the time evolution of an observable $f$, that is governed by the equation
\begin{equation}
\frac{df}{dt}=\lbrace H,f\rbrace
\end{equation}
Here the Poisson brackets is completely determined by its action on
the coordinate functions
\begin{equation}
\lbrace q_i,p_j\rbrace= \delta_{ij}
\end{equation}
(together with $ \lbrace q_i,q_j\rbrace= \lbrace p_i,p_j\rbrace= 0$)
where $(q_1, \ldots , q_n, p_1, \ldots , p_n)$ are 
local coordinates on the $2n$- dimensional manifold $M$.

In quantum mechanics, a physical system can be described
by a complex Hilbert space $\mathcal{H}$ together with an Hamiltonian operator $\widehat{H}$.
In this formalism a physical state is represented by a vector in $\mathcal{H}$ while the physical observables are now 
self-adjoint operators in the space $\mathcal{L}(\mathcal{H})$ of linear operators on $\mathcal{H}$. 
The time evolution of a physical observable $\hat{f}$ is given by
\begin{equation}
\frac{d\hat{f}}{dt} = \frac{i}{\hbar}\left[ \hat{H},\hat{f}\right] 
\end{equation}
where $[\cdot,\cdot]$ is the usual commutator of operators. Finally,  the position $\hat{q}_i$ 
and momentum $\hat{p}_j$ operators satisfy
the canonical commutation relations:
\begin{equation}
[\hat{q}_i,\hat{p}_j] = i\hbar \delta_{ij}.
\end{equation}

The correspondence between classical and quantum mechanics has been studied from many different points of view. Now we focus our attention on the attempt of finding a precise mathematical procedure to associate to a classical observable a quantum analog. 
This was first approached by trying to construct a correspondence between the commutative algebra $C^{\infty}(M)$ and the non-commutative algebra of operators. Starting from the quantization of $\mathbb{R}^{2n}$, the first result was achieved by Groenewold \cite{Gro}, which states that  the Poisson algebra $C^{\infty}(\mathbb{R}^{2n})$ can not be quantized in such a way that the Poisson bracket of two classical observables is mapped into the Lie bracket of the correspondent operators. 

The idea of Bayen, Flato et al. \cite{BFLS},\cite{FLS1}, \cite{FLS2} was a change of perspective: instead of mapping functions to operators, the algebra of functions can be deformed into a non-commutative one.
In particular, they proved
that on the symplectic vector space $\mathbb{R}^{2n}$, there exists a standard deformation quantization, or star product, known as the Moyal-Weyl product. 
The origins of the Moyal-Weyl product can be found in the works of Weyl \cite{Wy} 
and Wigner \cite{Wi}, where they give an explicit correspondence between functions and operators, and of Groenewold \cite{Gro} and Moyal \cite{Mo}, where 
the product and the bracket of operators defined by Weyl have been introduced. The existence of an associative star product has been generalized to a symplectic manifold admitting a flat connection $\nabla$ in \cite{BFLS}. 
The first proof of the existence of star product for any symplectic manifold was given by De Wilde and Lecomte \cite{DL} and few years later by Fedosov \cite{Fe}. In subsequent works (e.g. \cite{NT}, \cite{Gu}) the equivalence classes of star products on symplectic manifolds and the connection with de Rham cohomology has been studied. It came out that the equivalence classes of star products and elements in $H_{dR}^2(M)  \llbracket \epsilon  \rrbracket $ are in a one-to-one correspondence.

The existence and classification of star product culminated with Kontsevich's Formality Theorem, that was first formalized in a conjecture in \cite{Ko1} and then proved in \cite{Ko}. Kontsevich showed that any finite dimensional Poisson manifold $M$ admits a canonical deformation quantization by establishing a correspondence between the set of isomorphism classes of deformations of $C^{\infty}(M)$ and the set of equivalence classes of formal Poisson structures on $M$. In the following sections we introduce the classification of star products proved by Kontsevich and then we show how this result follows from the Formality theorem.

\section{Deformation quantization of Poisson manifolds}\label{sec:dqp}

In this section we introduce the basic notion of formal deformation of an algebra $A$ and then we explain the connection of deformations with Poisson structures. The first step to discuss the theory of classification of star products is the introduction of a new tool, the formal power series.

\subsection{Formal power series}

Given a sequence $a_n$, $n\in \mathbb{N}_0$ of elements in a commutative ring $k$, a formal power series is
\begin{equation}
a=\sum_{n=0}^{\infty}a_n x^n
\end{equation}
where $x$ is just a formal indeterminate, that means that we look at power series as purely algebraic objects, which we manipulate according to some set of rules. In particular, we are not interested in the analytic properties of the formal power series, for instance we will not require the convergence. 

They can be manipulate algebrically:
\begin{align}
a\pm b &=\sum_{n=0}^{\infty}(a_n\pm b_n)x^n\\
ab &=\sum_{n=0}^{\infty}c_n x^n\qquad c_n=\sum_{k=0}^n a_kb_{n-k}
\end{align}
With these two operations, the set of all formal power series becomes a commutative ring, denoted by $k  \llbracket x \rrbracket $.
If we can prove the convergence in a neighborhood of zero, we can see that the coefficients are the same as the Taylor expansion (see \cite{FS} for more details).

\subsection{Classification of star product}\label{sec:cl}

The idea of star product relies on the more basic definition of formal deformation of an algebra given by Gerstenhaber \cite{G}.
\begin{definition}
Let $A$ be an associative and unital algebra over a commutative ring $k$. A \textbf{formal deformation} of the algebra $A$ is a formal power series 
\begin{equation}\label{eq: star}
a\star b=ab+\sum_{k=1}^{\infty}\hbar^k P_k(a,b)
\end{equation}
where $a,b\in A\subset A  \llbracket \hbar  \rrbracket$ and $P_m:A\times A\rightarrow A$ are $k$-bilinear maps such that
the product $\star$ is associative.
\end{definition}
The deformed algebra over the ring $k  \llbracket \hbar  \rrbracket $ is denoted by $A  \llbracket \hbar  \rrbracket $.

Let $M$ be a smooth manifold and consider the algebra $C^{\infty}(M)$ of smooth functions on $M$ endowed with the pointwise product
\begin{equation}
f\cdot g(x):=f(x)g(x)\qquad \forall x\in M.
\end{equation}
In this case the ring $k$ is $\mathbb{R}$. A deformation quantization of $C^{\infty}(M)$ is a formal deformation of $A=C^{\infty}(M)$ such that it preserves the unit of the algebra. Let's state it more clearly:

\begin{definition}\label{def-star}
A \textbf{star product} on $M$ is an $\mathbb{R}\llbracket \hbar  \rrbracket$-bilinear map
\begin{align}\nonumber
C^{\infty}(M)\llbracket \hbar  \rrbracket\times C^{\infty}(M)\llbracket \hbar  \rrbracket &\rightarrow C^{\infty}(M)\llbracket \hbar  \rrbracket\\
(f,g) &\mapsto f\star g
\end{align}
such that
\begin{enumerate}
    \item $f\star g=f\cdot g+\sum_{k=1}^{\infty}\hbar^k P_k(f,g)$
	\item $(f\star g)\star h=f\star (g\star h)$    $\forall f,g,h\in C^{\infty}(M)$   associativity
	\label{cond-un}\item $f\star 1=1\star f= f$    $\forall f\in C^{\infty}(M)$.
\end{enumerate}
\end{definition}
The requirement (\ref{cond-un}) implies that the $P_n$'s are bidifferential operators.

\begin{example}
The first example of a deformed product on $C^{\infty}(\mathbb{R}^{2n})$ is the \textbf{Moyal product}. We introduce briefly this example and we show the relation of the Moyal product with the standard Poisson bracket.
Let us consider the manifold $M=\mathbb{R}^{2n}$ with Darboux coordinates 
$$(q,p)=(q_1,\dots,q_n,p_1,\dots, p_n)$$
 The Moyal star product on this manifold is given by
\begin{equation} \label{moyal}
f\star g(q,p)= f(q,p)\; \exp \left(i \frac{\hbar}{2}
\left( \rd_q \ld_p -  \rd_p \ld_q \right) \right)\; g(q,p), 
\end{equation} 
where the $\rd$'s operate on $f$ and the $\ld$'s on $g$; 
we can also define a star product by
\begin{equation} \label{gen-moyal}
f \star g\; (x) = \exp \left(i \frac{\hbar}{2}\; \alpha^{ij} \partial_{x^i} \partial_{y^j}  \right) 
f(x)\, g(y)\Big|_{y=x}.
\end{equation}
where $\lbrace\alpha^{ij}\rbrace$ is a constant skew-symmetric tensor
 on $\mathbb{R}^2n$ with $i,j=1, \ldots, n$.
Such a product satisfies the requirements given in Definition \ref{def-star}, hence it is a well defined star product. Observe that the skew-symmetric part of the term in $\hbar$ of \ref{moyal} is
\begin{equation}
\frac{\partial f}{\partial p_i}\frac{\partial g}{\partial q^i}-\frac{\partial f}{\partial q^i}\frac{\partial g}{\partial p_i},
\end{equation}
which coincides with the standard Poisson bracket introduced in Example \ref{r2}. In this particular case we can claim that defining the Moyal product on $C^{\infty}(\mathbb{R}^{2n})$, it inherits automatically a structure of Poisson algebra.
\end{example}

The last observation can be generalized. From the requirements on $P_n$ it follows that the skew-symmetric part of $P_1$ defined by
\begin{equation}\label{eq:ps}
\lbrace f,g\rbrace=P_1(f,g)-P_1(g,f)\qquad f,g\in A
\end{equation}
is a Poisson bracket, then $A$ is a Poisson algebra. If $A=C^{\infty}(M)$, this implies that $M$ is a Poisson manifold, with Poisson structure $\pi$ such that
$$
\lbrace f,g\rbrace=\pi(df,dg).
$$

Now we want to understand if, given a Poisson manifold $(M,\pi)$, 
we can define an associative product $\star$ on the algebra of smooth functions using the structure $\pi$, i.e. such that:
\begin{enumerate}
	\item it is a deformation of the pointwise product
	\item $\frac{1}{\hbar} (f \star g - g \star f)\mod \hbar= \lbrace f,g\rbrace$.
\end{enumerate}

This problem has been solved by Kontsevich, by classifying star products in terms of Poisson structures. In order to introduce this result we need the concept of equivalent star products. 

\begin{definition}\label{def:iso}
An \textbf{isomorphism} of two deformations $\star$, $\star'$ is a formal power series \linebreak $T(a)=a+\sum_{m=0}^{\infty}t^m T_m(a)$ such that
\begin{equation}\label{iso}
 T(a\star b)=T(a)\star' T(b)\quad\forall a,b\in A.
\end{equation}
The star products $\star$ and $\star'$ are said equivalent.
\end{definition}

From now on we denote with $[\star]$ the equivalence class of star products relative to the previous definition of equivalence.
We can prove that different star products belonging to the same equivalence class induce the same Poisson bracket by setting (\ref{eq:ps}). More precisely,

\begin{lemma}
Let $\star$ be a star product on $C^{\infty}(M)$. The Poisson bracket
$$
\lbrace f,g\rbrace=P_1(f,g)-P_1(g,f)\qquad f,g\in C^{\infty}(M)
$$
 depends only on the equivalence class $[\star]$.
\end{lemma}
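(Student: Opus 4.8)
The plan is to read off the order-$\hbar$ part of the intertwining relation (\ref{iso}) and to observe that the resulting discrepancy between $P_1$ and $P_1'$ is \emph{symmetric}, hence annihilated by antisymmetrization. Let $\star$ and $\star'$ be equivalent star products, with $f\star g = f\cdot g + \hbar P_1(f,g) + O(\hbar^2)$ and $f\star' g = f\cdot g + \hbar P_1'(f,g) + O(\hbar^2)$, and let $T$ be an equivalence as in Definition \ref{def:iso}. Identifying the formal parameter of $T$ with $\hbar$ and noting that the $\hbar^0$-part of $T$ must be the identity (so that the leading terms of $\star$ and $\star'$ agree), we write $T = \mathrm{id} + \hbar T_1 + O(\hbar^2)$ with $T_1$ a linear operator on $C^\infty(M)$.

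First I would expand the left-hand side of $T(f\star g) = T(f)\star' T(g)$. Since $T$ is obtained by applying a power series in $\hbar$ whose constant term is the identity,
\[
T(f\star g) = f\cdot g + \hbar\bigl(P_1(f,g) + T_1(f\cdot g)\bigr) + O(\hbar^2).
\]
Then I would expand the right-hand side,
\[
T(f)\star' T(g) = f\cdot g + \hbar\bigl(f\cdot T_1(g) + T_1(f)\cdot g + P_1'(f,g)\bigr) + O(\hbar^2).
\]
Comparing the coefficients of $\hbar^1$ yields
\[
P_1(f,g) - P_1'(f,g) = f\cdot T_1(g) + T_1(f)\cdot g - T_1(f\cdot g).
\]

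The key point is that the right-hand side of this identity --- the Hochschild coboundary $(\delta T_1)(f,g)$ of $T_1$ --- is symmetric under exchanging $f$ and $g$, because the pointwise product on $C^\infty(M)$ is commutative. Hence, antisymmetrizing the last identity makes the right-hand side disappear:
\[
\bigl(P_1(f,g) - P_1(g,f)\bigr) - \bigl(P_1'(f,g) - P_1'(g,f)\bigr) = 0,
\]
i.e. the brackets defined by $\star$ and $\star'$ through (\ref{eq:ps}) coincide. Since any representative of $[\star]$ is related to any other by such an equivalence $T$, the Poisson bracket depends only on $[\star]$.

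There is no serious obstacle here; the only care needed is bookkeeping --- checking that only the first Taylor coefficient $T_1$ of the equivalence contributes at order $\hbar$, and that substituting the power series $f\star g$ into $T$ is legitimate and has the stated $\hbar^1$-coefficient. It is worth noting that the computation in fact gives more, namely that equivalent star products have $P_1$'s differing by a Hochschild coboundary, which is the first instance of the cohomological classification of star products mentioned in the Introduction.
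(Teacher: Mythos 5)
Your proof is correct and follows essentially the same route as the paper: expand the intertwining relation $T(f\star g)=T(f)\star' T(g)$ to first order in $\hbar$, obtain $P_1(f,g)-P_1'(f,g)=fT_1(g)+T_1(f)g-T_1(fg)$, and note that this difference is symmetric in $f,g$, hence vanishes under antisymmetrization. The additional remark that the difference is a Hochschild coboundary is a nice observation but not needed for the lemma.
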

\begin{proof}
Consider two equivalent star products $\star$ and $\star'$. From Definition \ref{def:iso} we have
$$
T(a\star b)=T(a)\star' T(b)
$$
Expanding the formal power series of $T$, $\star$ and $\star'$, the term in $\hbar$ of this equation reads
$$
P_1(f,g)+T_1(fg)=P'_1(f,g)+T_1(f)g+fT_1(g).
$$
This implies that $P_1(f,g)-P'_1(f,g)$ is symmetric in $f, g$, hence it does not contribute to $\lbrace f,g\rbrace$.
\end{proof}

This implies that, given an equivalence class of star products on $C^{\infty}(M)$, it induces a Poisson structure $\pi$ on the manifold $M$. 
As introduced in Section \ref{sec:pm}, the problem of classifying star products on a given Poisson manifold $M$ is solved by proving that there is a one-to-one correspondence between equivalence classes of star products and equivalence classes of formal Poisson structures. In the following we define the equivalence relation of formal poisson structures.

We can define a group of diffeomorphisms of $M$ acting on the set of Poisson structures, given by
\begin{equation} \label{phi-on-pi}
\pi_{\phi} := \phi_{*} \pi.
\end{equation}
This action can be easily extended to formal power series; let us introduce
 a bracket
on $C^{\infty}(M)\llbracket \hbar  \rrbracket$ by:
\begin{equation} \label{formal-poisson}
\lbrace f,g\rbrace_{\hbar}:= \sum_{n=0}^{\infty} \hbar^n   \sum_{\substack{
  i,j,k=0 \\
    i+j+k=n
  }}^n\pi_i(d f_j , d g_k)
\end{equation}
where 
$$
f = \sum_{j=0}^{\infty} \hbar ^j f_j
\qquad\text{and}\qquad 
g = \sum_{k=0}^{\infty} \hbar^k g_k
$$
The structure
$$
\pi_{\hbar}:= \pi_0 + \pi_1 \, \hbar + \pi_2 \, \hbar^2 + \cdots
$$
is called \textbf{formal Poisson structure} if $\lbrace\cdot,\cdot\rbrace_{\hbar}$ is a Lie bracket 
on $C^{\infty}(M)\llbracket \hbar  \rrbracket$.

The gauge group is given by 
the formal power series of the form
$$
\phi_{\hbar}:= \exp(\hbar X)
$$
called\textbf{formal diffeomorphisms},
where $X:=\sum_{k=0}^{\infty} \hbar^k X_k $ is a \textbf{formal vector field}, i.e.\
a formal power series whose coefficients are vector fields. It is useful to remark that the structure of a group
is given by defining the product 
of two such exponentials via the Baker-Campbell-Hausdorff formula:
\begin{equation} \label{bch}
\exp(\hbar X) \cdot  \exp(\hbar Y) :=
\exp(\hbar X + \hbar Y + \frac{1}{2} \hbar [X,Y] + \cdots).
\end{equation}
Hence we can generalize the action \eqref{phi-on-pi} as follows:
\begin{equation} \label{gauge-action}
\exp (\mathcal{L}_X)\pi_{\hbar}=\exp ( \hbar X)_{*} \pi_{\hbar} := 
\sum_{n=0}^{\infty} \hbar^n 
\sum_{\substack{
  i,j,k=0 \\
    i+j+k=n
  }}^n (\mathcal{L}_{X_i})^j \pi_k
\end{equation}

\begin{definition}
Two formal Poisson structures $\pi_{\hbar}$ and $\pi'_{\hbar}$ are said \textbf{equivalent} if there is a formal power series $X=\sum_{m=0}^{\infty}\hbar^m X_m$ such that 
$$\pi'_{\hbar}=\exp(\mathcal{L}_X)\pi_{\hbar}.$$ 
\end{definition}
The equivalence class is denoted by $[\pi_{\hbar}]$, as usual. We can finally state the Kontsevich theorem:

\begin{theorem}[Kontsevich, \cite{Ko}]\label{thm:k}
There is a bijection
$$
[\pi_{\hbar}]\longleftrightarrow [\star]
$$
 natural with respect to diffeomorphisms, between the set of equivalence classes $[\pi_{\hbar}]$ of formal Poisson structures  on $M$
 \begin{equation} \label{eq:phb}
\pi_{\hbar}=0+\hbar\pi_1+\hbar^2\pi_2+\dots
\end{equation}
 and the set of isomorphism classes $[\star]$ of deformation quantizations of $C^{\infty}(M)$.
\end{theorem}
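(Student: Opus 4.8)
The plan is to deduce Theorem~\ref{thm:k} from the stronger \emph{Formality theorem}, which compares two differential graded Lie algebras (DGLAs) attached to $M$. On one side sits the DGLA $T_{\mathrm{poly}}(M)=\bigoplus_{k\ge -1}\Gamma(M,\Lambda^{k+1}TM)$ of polyvector fields, with zero differential and the Schouten--Nijenhuis bracket; on the other the DGLA $D_{\mathrm{poly}}(M)$ of polydifferential operators, a subcomplex of the Hochschild cochain complex of $C^{\infty}(M)$, with Hochschild differential $\mathrm{d}_H$ and Gerstenhaber bracket. The first step is to recognize both sides of the desired bijection as Maurer--Cartan pictures: a formal Poisson structure $\pi_\hbar=\hbar\pi_1+\hbar^2\pi_2+\cdots$ as in \eqref{eq:phb} is exactly a Maurer--Cartan element of $T_{\mathrm{poly}}(M)\llbracket\hbar\rrbracket$, since the Jacobi identity for $\{\cdot,\cdot\}_\hbar$ is equivalent to $[\pi_\hbar,\pi_\hbar]=0$, and equivalence of formal Poisson structures is precisely gauge equivalence for the action \eqref{gauge-action}. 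Dually, a star product $f\star g=fg+\sum_k\hbar^k P_k(f,g)$ is, after subtracting the undeformed product, a Maurer--Cartan element $B=\sum_k\hbar^kP_k$ of $\hbar D_{\mathrm{poly}}(M)\llbracket\hbar\rrbracket$, because associativity is the equation $\mathrm{d}_H B+\tfrac12[B,B]=0$, and isomorphism of star products in the sense of Definition~\ref{def:iso} is gauge equivalence. So Theorem~\ref{thm:k} asserts that these two Maurer--Cartan moduli sets coincide, naturally in $M$.

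I would obtain this by constructing an $L_\infty$-quasi-isomorphism $\mathcal{U}\colon T_{\mathrm{poly}}(M)\to D_{\mathrm{poly}}(M)$, that is, a sequence of graded-antisymmetric maps $\mathcal{U}_n\colon\Lambda^n T_{\mathrm{poly}}(M)\to D_{\mathrm{poly}}(M)$ of degree $1-n$ satisfying the $L_\infty$-relations, whose first component $\mathcal{U}_1$ is the Hochschild--Kostant--Rosenberg (HKR) map $\xi_0\wedge\cdots\wedge\xi_k\mapsto\big(a_0\otimes\cdots\otimes a_k\mapsto\tfrac1{(k+1)!}\sum_\sigma\operatorname{sgn}(\sigma)\prod_i\xi_{\sigma(i)}(a_i)\big)$, which is already a quasi-isomorphism of complexes. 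I would then invoke the standard homological fact that an $L_\infty$-quasi-isomorphism induces a bijection on Maurer--Cartan moduli sets; granting it, $\mathcal{U}$ sends $\pi_\hbar$ to the star product whose bidifferential operators are assembled from $\sum_{n\ge1}\tfrac1{n!}\mathcal{U}_n(\pi_\hbar,\dots,\pi_\hbar)$, and this assignment descends to the claimed bijection $[\pi_\hbar]\leftrightarrow[\star]$, with naturality coming from the naturality of $\mathcal{U}$ in $M$.

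The core of the argument is the construction of $\mathcal{U}$, which I would first carry out for $M=\mathbb{R}^d$, following Kontsevich. Here $\mathcal{U}_n(\gamma_1,\dots,\gamma_n)=\sum_\Gamma w_\Gamma\,B_\Gamma(\gamma_1,\dots,\gamma_n)$, a finite sum over admissible graphs $\Gamma$ with $n$ ``aerial'' vertices and $2$ ``ground'' vertices, where $B_\Gamma$ is the bidifferential operator obtained by contracting the $\gamma_i$ along the edges of $\Gamma$, and the weight $w_\Gamma$ is the integral over the compactified configuration space $\overline{C}_{n,2}$ of $n$ points in the upper half-plane and $2$ on the real line of a product of the closed $1$-forms $\mathrm{d}\phi_e$ associated with the edges $e$ (differentials of the hyperbolic angle functions). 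One must check three things: that the integrals converge, which follows because the integrand extends smoothly to the Fulton--MacPherson-type compactification; that $\mathcal{U}_1$ reduces to HKR; and, the crucial point, that the $L_\infty$-relations hold. This last step is the main obstacle. It is proved by applying Stokes' theorem to $\int_{\overline{C}_{n,2}}\mathrm{d}(\omega)=0$ and analyzing the codimension-one boundary strata of $\partial\overline{C}_{n,2}$: strata in which a cluster of aerial points collapses together in the interior reproduce the terms $\sum\pm[\mathcal{U}_i,\mathcal{U}_j]$ together with the contribution of the Schouten bracket, while strata in which a cluster approaches the real line reproduce $\mathrm{d}_H$ and the Gerstenhaber bracket on the $D_{\mathrm{poly}}$ side; all remaining strata contribute zero for dimension and degree reasons, which requires the vanishing lemmas (weights of graphs with a double edge vanish, weights of graphs with an ``isolated'' univalent aerial vertex or with a bad ground configuration vanish, and the key lemma that $\int_{\overline{C}_{n,2}}$ with $2n-2$ one-forms is the only surviving case). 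Collecting these contributions reproduces the $L_\infty$-relation term by term.

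Finally, to pass from $\mathbb{R}^d$ to an arbitrary manifold $M$ I would use formal geometry in the style of Gelfand--Kazhdan and Fedosov: choose a torsion-free connection, form the bundle of formal coordinate systems, resolve both $T_{\mathrm{poly}}(M)$ and $D_{\mathrm{poly}}(M)$ by their fiberwise (formal, $\mathbb{R}^d$-valued) analogues equipped with a flat Fedosov-type connection $D$, apply the local $\mathbb{R}^d$-formality map fiberwise, and descend via $D$-cohomology; the affine space of such globalizations is nonempty, and checking independence of the auxiliary choices up to $L_\infty$-homotopy yields naturality with respect to diffeomorphisms. The delicate point here is that the local map must be compatible (flat and equivariant) with formal coordinate changes, which forces the use of Kontsevich's $\mathrm{GL}_d(\mathbb{R})$-equivariant refinement of the construction (or Dolgushev's globalization); but this introduces no new analytic difficulty beyond the $\mathbb{R}^d$ case treated above.
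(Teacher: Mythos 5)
Your proposal follows essentially the same route as the paper: identify formal Poisson structures and star products as Maurer--Cartan elements of $\mathfrak{g}_S^{\bullet}(M)$ and $\mathfrak{g}_G^{\bullet}(C^{\infty}(M))$ modulo gauge, and then apply the Formality theorem together with the fact that an $L_{\infty}$-quasi-isomorphism induces a bijection on Maurer--Cartan moduli sets. The only difference is that you additionally sketch the proof of the Formality theorem itself (graphs, configuration-space weights, Stokes' theorem, and globalization), whereas the paper takes that theorem as given by citation; your sketch is consistent with Kontsevich's argument and adds nothing that conflicts with the paper's reduction.
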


 Moreover, if $\pi_{\hbar}$ in (\ref{eq:phb}) is a formal Poisson structure, we will denote by $\star_{\hbar}$ a star product from the equivalence class corresponding to $\pi_{\hbar}$ by the above theorem. The Poisson bracket $\lbrace\cdot,\cdot\rbrace$ on $C^{\infty}(M)$ associated to $\star_{\hbar}$ by formula (\ref{eq:ps}) is given by the term $\pi_1$ of $\pi_{\hbar}$.
This implies that any Poisson manifold $(M,\pi)$ admits a canonical deformation quantization, that is the quantization obtained applying this theorem to $\pi_{\hbar}=\hbar\pi$.
This result follows from a more general one, called Formality theorem, that we introduce in the following section.

\section{Formality Theory}\label{sec:ft}

In this section we show that to each deformation is attached a differential graded Lie algebra via the solutions to the Maurer Cartan equation modulo the action of a gauge group. In order to discuss this problem, we need to introduce the definition of differential graded Lie algebras and some properties. We focus our attention of the differential graded Lie algebras of multidifferential operators and multivector fields and, by means of the concept of $L_{\infty}$ morphism, we introduce the Formality theorem. Finally we give a sketch of the proof of Theorem \ref{thm:k}. More details can be found in \cite{Arb}, \cite{Ma}, \cite{CI}.

\subsection{Differential Graded Lie Algebras}
\begin{definition}\label{gla}
A \textbf{graded Lie algebra} (GLA) is a graded vector space \linebreak 
$\mathfrak{g}=\oplus_{i\in\mathbb{Z}}\mathfrak{g}^i$ endowed with a bilinear operation
\begin{equation}
[\cdot,\cdot]:\mathfrak{g}\otimes\mathfrak{g}\rightarrow \mathfrak{g}
\end{equation}
satisfying the following conditions:
\begin{enumerate}
\item homogeneity, $[a,b]\in \mathfrak{g}^{\alpha+\beta}$
\item skew-symmetry, $[a,b]=-(-1)^{\alpha\beta}[b,a]$
\item Jacobi identity, $[a,[b,c]]=[[a,b],c]+(-1)^{\alpha\beta}[b,[a,c]]$
\end{enumerate}
for any $a\in \mathfrak{g}^{\alpha}$, $b\in \mathfrak{g}^{\beta}$ and $c\in \mathfrak{g}^{\gamma}$.
\end{definition}

As an example, any Lie algebra is a GLA concentrated in degree 0.

\begin{definition}
A \textbf{differential graded Lie algebra} (DGLA) is a GLA $\mathfrak{g}$ together with a differential $d:\mathfrak{g}\rightarrow \mathfrak{g}$, i.e. a linear operator of degree 1 which satisfies the Leibnitz rule
\begin{equation}
d[a,b]=[da,b]+(-1)^{\alpha\beta}[a,db]\qquad a\in \mathfrak{g}^{\alpha},\quad b\in \mathfrak{g}^{\beta}
\end{equation}
 and $d^2=0$.
\end{definition}
Given a DGLA we can define immediately the cohomology \footnote{here we don't give basic definitions on cohomology theory; an excellent introduction can be found in \cite{BT}} of $\mathfrak{g}$ as
\begin{equation}
H^i(\mathfrak{g}):=Ker(d:\mathfrak{g}^i\rightarrow \mathfrak{g}^{i+1})/Im(d: \mathfrak{g}^{i-1}\rightarrow \mathfrak{g}^i)
\end{equation}
The set $H:=\oplus_i H^i(\mathfrak{g})$ has a natural structure of graded Lie algebra (because of the compatibility between $d$ and $[\cdot,\cdot]$ on $\mathfrak{g}$, it inherits the GLA structure defined on equivalence classes $\vert a\vert, \vert b\vert\in H$ by
$$
\left[ \vert a\vert, \vert b\vert\right] _H:=\vert [a,b]\vert.
$$
Finally, $H$ is a DGLA by putting $d=0$.)

A morphism of DGLA is a linear homogeneous map $f:\mathfrak{g}_1\rightarrow \mathfrak{g}_2$ of degree zero, such that
$$
f\circ d=d\circ f
$$
and
$$
f([x,y])=[f(x),f(y)].
$$
The morphism $f:\mathfrak{g}_1\rightarrow \mathfrak{g}_2$ of DGLA's induces a morphism $H(f):H_1\rightarrow H_2$ between cohomologies (i.e. the sequence of homomorphisms $H^n(f):H^n(\mathfrak{g}_1)\rightarrow H^n(\mathfrak{g}_2)$ ). A \textbf{quasi-isomorphism} is a morphism of DGLA's inducing isomorphisms in cohomology.

\begin{definition}\label{def: fdgla}
A differential graded Lie algebra $\mathfrak{g}$ is \textbf{formal} if it is quasi-isomorphic to its cohomology, regarded as a DGLA with zero differential and the induced bracket.
\end{definition}
The result of Kontsevich that we are going to introduce is called \textbf{formality theorem} because it shows the the DGLA of multidifferential operators, that we are going to define, is formal.

We already stated that to each deformation is attached a differential graded Lie algebra via the solutions to the Maurer Cartan equation modulo the action of a gauge group. 
We introduce now the Maurer-Cartan equation of a DGLA and the gauge group defined for any formal DGLA by generalizing what we did in Section \ref{sec:cl} with $\pi_{\hbar}$. 

The Maurer-Cartan equation of the DGLA $\mathfrak{g}$ is
\begin{equation} \label{MC-eq}
d a + \frac{1}{2} [a,a] = 0 \qquad a \in \mathfrak{g}^1,
\end{equation}
We can define a gauge group acting on the solutions of the Maurer-Cartan equation starting 
from the degree zero part of any formal DGLA. Indeed, given a DGLA $\mathfrak{g}$ we can define its formal counterpart
$ \mathfrak{g}\llbracket \hbar  \rrbracket$ by
$$
\mathfrak{g}\llbracket \hbar  \rrbracket:=\mathfrak{g}\otimes k\llbracket \hbar  \rrbracket;
$$
it has the natural structure of a DGLA.
It is clear that the degree zero part $\mathfrak{g}^0\llbracket \hbar  \rrbracket$ is a Lie algebra. 

As seen in the case of formal Poisson structures, we can define the gauge group formally as the set 
$$G:= \exp( \hbar \mathfrak{g}^0\llbracket \hbar  \rrbracket)$$ 
and introduce a well-defined 
product taking the Baker-Campbell-Hausdorff formula \ref{bch}.  
Finally, the action of the group on $\hbar \mathfrak{g}^1\llbracket \hbar  \rrbracket$ can 
be defined generalizing the adjoint action in \ref{gauge-action},
namely:
\begin{align}\nonumber
\exp(\hbar  g) a &:= \sum_{n=0}^{\infty} \frac{(\ad g)^n}{n !} (a) -
\sum_{n=0}^{\infty} \frac{(\ad g)^n}{(n+1) !}(d g) \\
\label{gac} &=a + \hbar [g,a] - \hbar d g + o(\hbar^2)
\end{align}

for any $g \in  \mathfrak{g}^0\llbracket \hbar  \rrbracket$ and $a \in  \mathfrak{g}^1\llbracket \hbar  \rrbracket$.
It is easy to show that this action preserves
the subset $MC(\mathfrak{g}) \subset \hbar\mathfrak{g}^1\llbracket \hbar  \rrbracket$ of solutions to the 
(formal) Maurer-Cartan equation. We will discuss explicitly the Maurer-Cartan equation's solutions and 
the gauge group action in the case of multidifferential operators and multivector fields.

\subsection{Multivector fields and multidifferential operators}
The Kontsevich's theorem, as we know, proved a correspondence between Poisson structures and star product; in order to prove this correspondence we introduce the DGLA's they belong to.
\subparagraph{Multivector fields}

 By definition, a $k$-multivector field $X$ is a section of the $k$-th exterior power $\wedge^k TM$ of the tangent space $TM$. In local coordinates $\lbrace x_i\rbrace_{i=1}^m$, the multivector field $X\in \Gamma(M,\wedge^k TM)$ can be written as
 \begin{equation}
X=\sum_{i_1\dots i_k=1}^m X^{i_1\dots i_k}(x)\partial_{i_1}\wedge\dots\wedge\partial_{i_k}.
\end{equation}
It is evident that $\Gamma:=\oplus_{k=0}^{\infty}\Gamma^k$ is a graded vector space, where
\begin{equation}
\Gamma^k =\begin{cases} C^{\infty}(M), & \mbox{if }k=0 \\\Gamma(M,\wedge^k TM), & \mbox{if } k\geq1
\end{cases}
\end{equation}
The Lie algebra structure is given by the \textbf{Schouten-Nijenhuis bracket} $[\cdot,\cdot]_S:\Gamma^k\otimes \Gamma^l\rightarrow \Gamma^{k+l-1}$ defined by
\begin{equation}\nonumber
\begin{split}
&[X_1\wedge\dots\wedge X_k,Y_1\wedge\dots\wedge Y_l]_S:=\\
&\sum_{i=1}^k\sum_{j=1}^l(-1)^{i+j}[X_i,Y_j]\wedge X_1\wedge\dots\wedge\hat{X}_i\wedge\dots\wedge X_k\wedge Y_1\wedge\dots\wedge\hat{Y}_j\wedge\dots\wedge Y_l.
\end{split}
\end{equation}
This bracket satisfies the following properties
\begin{itemize}
\item[i)] $[X,Y]_S = - (-)^{(x+1)(y+1)} [Y,X]$
\item[ii)] $[X,Y \wedge Z] = [X,Y]\wedge Z+ (-)^{(y+1) z} Y\wedge[X,Z]$
\item[iii)] $[X,[Y,Z]] = [[X,Y],Z]+ (-)^{(x+1)(y+1)} [Y,[X,Z]]$
\end{itemize}
for any triple $X$,$Y$ and $Z$ of degree resp. $x$, $y$ and $z$. In order to recover the sign used in Definition \ref{gla}, we shift the degree
\begin{equation}
\tilde{\Gamma}:=\oplus_{i=-1}^{\infty}\tilde{\Gamma^i}\qquad\text{where } \tilde{\Gamma^i}:=\Gamma^{i+1}
\end{equation}
The GLA $\tilde{\Gamma}$ is turned into a DGLA setting the differential
$d:\tilde{\Gamma}\rightarrow \tilde{\Gamma}$ to be identically zero.
We denote this DGLA by $\mathfrak{g}_S^{\bullet}(M)$.

We now focus to the particular class of Poisson bivector fields.
Recall that given a bivector field $\pi \in \mathfrak{g}_S^{1}(M)$, we can  define a Poisson bracket by
\begin{equation}
\lbrace f,g\rbrace=\pi(df,dg)
\end{equation}
 which is by construction skew-symmetric and satisfies Leibnitz rule.
The Jacobi identity in local coordinates is:
$$
\begin{array}{c}
\pi^{ij} \, \partial_j \pi^{kl}  \,\partial_j f  \,\partial_k g  \,\partial_l h +
\pi^{ij}  \,\partial_j \pi^{kl}  \,\partial_j g  \,\partial_k h  \,\partial_l f +
\pi^{ij}  \,\partial_j \pi^{kl}  \,\partial_j h  \,\partial_k f  \,\partial_l g = 0 \\
\Updownarrow\\
\pi^{ij} \, \partial_j \pi^{kl} \; \partial_i \wedge \partial_k \wedge \partial_l =0 
\end{array}
$$
The last line is equivalent to
$$
[\pi,\pi]_S=0
$$ 
Recalling that $\mathfrak{g}_S^{\bullet}(M)$ is a DGLA with $d=0$, it is evident that
 Poisson bivector fields are  the solutions to the 
 equation \ref{MC-eq}  on $\mathfrak{g}_S^{\bullet}(M)$
\begin{equation} \label{MC-eq-calV}
d \pi + \frac{1}{2} [\pi,\pi]_S = 0, \qquad \pi \in \mathfrak{g}_S^{1}(M). 
\end{equation}

Finally, formal Poisson brackets $\lbrace \cdot,\cdot\rbrace_{\hbar}$ are associated to a formal 
bivector $\pi_{\hbar}\in\mathfrak{g}_S^{1}(M)\llbracket \hbar  \rrbracket$ as in \eqref{formal-poisson} 
and the gauge group action is defined in \ref{gauge-action}.

\subparagraph{Multidifferential operators}
Now we discuss the subalgebra of the Hochschild DGLA of multidifferential operators. Recall that
the Hochschild complex of an associative unital algebra $A$ is the complex $\tilde{C}(A,A)$ with vanishing components in degree $n<0$ 
and whose $n$-th component, for $n\geq 0$ is the space
\begin{equation}
\tilde{C}(A,A):= \sum_{n=-1}^{\infty}\tilde{C}^n(A,A)\qquad     \tilde{C}^n(A,A)=Hom(A^{\otimes n},A).
\end{equation}
By definition, the differential of a $n$-cochain $f$ is the $(n+1)$-cochain defined by
\begin{equation}\label{diff}
\begin{split}
&(-1)^n(df)(a_0,\dots, a_n) = a_0 f(a_1,\dots, a_n)-\\
&\sum_{i=0}^{n-1}(-1)^i f(a_0,\dots, a_i a_{i+1},\dots, a_n)
                                         +(-1)^{n-1}f(a_0,\dots, a_{n-1})a_n
\end{split}
\end{equation}
The Hochschild cohomology $H(A,A)$ of $A$ is the cohomology associated to the Hochschild complex $Ker\; d/Im\; d$.
The normalized 
Hochschild complex is 
\begin{equation}\label{eq: hcc}
C^n(A,A)=Hom(\bar{A}^{\otimes n},A)
\end{equation}
where $\bar{A}=A/k1$.
Now we introduce a new structure on the Hochschild complex, the Gerstenhaber bracket \cite{G}.
The Gerstenhaber product of $f\in\tilde{C}^n(A,A)$ and $g\in \tilde{C}^m(A,A)$ is the $(n+m-1)$-cochain defined by
\begin{equation}\nonumber
(f\circ g)(a_1,\dots, a_{n+m-1})=\sum_{j=0}^{n-1}(-1)^{(m-1)j}f(a_1,\dots, a_j, g(a_{j+1}, \dots, a_{j+m} ),\dots)
\end{equation}
that is not associative in general. As a consequence, we define the \textbf{Gerstenhaber bracket} as follows:
\begin{equation}\label{gb}
[D,E]_G=D\circ E-(-1)^{(n-1)(m-1)}E\circ D.
\end{equation}
We can easily check that it satisfies the (graded) Jacobi identity. We notice now  that the Hochschild differential can be expressed in terms of the Gerstenhaber bracket 
and the multiplication $m:A\otimes A\rightarrow A$ of $A$ as
\begin{equation}
d=[m, \cdot]_G:\tilde{C}^{\bullet}(A,A)\rightarrow \tilde{C}^{\bullet+1}(A,A)
\end{equation}
The space $C^{\bullet}(A,A)$ endowed with the Gerstenhaber bracket (\ref{gb}) and the differential (\ref{diff}) is a DGLA, called Hochschild DGLA. If $A=C^{\infty}(M)$, we are interested to a particular DGL subalgebra: the DGLA of multidifferential operators $\mathcal{D}$, so we consider only the maps from $A^{\otimes n}$ to $A$ which are multi-differential. More precisely,
\begin{equation}
\mathcal{D}:=\oplus_i \mathcal{D}^i
\end{equation}
 where $ \mathcal{D}^i$ are the subspaces of Hochschild $H^i(C^{\infty}(M),C^{\infty}(M))$ consisting of differential operators acting on $C^{\infty}(M)$. $\mathcal{D}$ is closed under $[\cdot,\cdot]_G$ and under the action of $d$, hence it is a DGL subalgebra of Hochschild DGLA.
 
 Because of Definition \ref{def-star} we are interested to a particular class of differential operators. Remember that the
  requirement of the star product $f\star g= f\cdot g+\sum_{n=1}^{\infty}\hbar^n P_n(f,g)$ to preserve the unit of the algebra
   implies $P_n(f,1)=P_n(1,f)=0$ for any $n>1$. This means we are interested to differential operators vanishing on constant
    functions. With this restriction we get a new DGLA, $\mathfrak{g}_G^{\bullet}(C^{\infty}(M))\subset \mathcal{D}$.

Finally we observe that the associativity of the product $m$ can be written in terms of $[\cdot,\cdot]_G$:
\begin{equation}
\begin{split}
[m,m]_G(f,g,h)&= \sum_{i=0}^1(-1)^i(m\circ_i m)(f,g,h)-\\
&-(-1)^1\sum_{i=0}^1(-1)^i(m\circ_i m)(f,g,h)=\\
&=2(m(m(f,g),h)-m(f,m(g,h)))=0
\end{split}
\end{equation}
Given an element $P \in \mathfrak{g}_G^{1}(C^{\infty}(M))$, we can interpret $m+P$ as a deformation
of the original product. As showed above, the associativity
of $m+P$ reads
$$
[m+P,m+P]_G=0.
$$
Observe that, since $m$ is associative and 
$[m,P]_G=[P,m]_G=dP$
the requirement of associativity of the deformed product $m+P$ can be rewritten exactly as a Maurer--Cartan equation \eqref{MC-eq}
\begin{equation} \label{MC-eq-calD}
dP+ \frac{1}{2} [P,P]_G =0.
\end{equation}
Since $P \in \hbar\mathfrak{g}_G^{\bullet}(C^{\infty}(M))\llbracket \hbar  \rrbracket$
is a formal sum of bidifferential operators,
we introduce
 the formal counterpart of the DGLA $\mathfrak{g}_G^{\bullet}(C^{\infty}(M))$ and 
the deformed product satisfies the requirements of star product (see Definition \ref{def-star}). The gauge group
is given by formal differential operators and the action on the star 
product is given by \ref{iso}.

\subsection{Kontsevich formality theorem}

A we mentioned above, Kontsevich's main result is that the DGLA $\mathfrak{g}_G^{\bullet}(C^{\infty}(M))$ is formal. This
 result relies on the existence of a previous result by Hochschild, Kostant and Rosenberg \cite{HKR} which establishes the
  existence of an isomorphism between the cohomologies of the algebra of multidifferential operators and the algebra of
   multivector fields.
\begin{theorem}[Hochschild-Kostant-Rosenberg \cite{HKR}]\label{hkr}
The formula
\begin{equation}
D_{\pi}(a_1,\dots, a_n)=\langle\pi, da_1\dots da_n\rangle
\end{equation}
defines a quasi-isomorphism
\begin{equation}
(\Gamma(T,\wedge^{\bullet}T),0)\rightarrow C^{\bullet}(C^{\infty}(M),C^{\infty}(M))
\end{equation}
In particular, the cohomology groups $ H^{\bullet}(C^{\infty}(M),C^{\infty}(M))$ is isomorphic to
\begin{equation}
\Gamma(T,\wedge^{\bullet}T),
\end{equation}
where the bracket induced by $[\cdot,\cdot]_G$ becomes the Schouten bracket $[\cdot,\cdot]_S$.
\end{theorem}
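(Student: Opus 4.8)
The plan is to construct the map $D:\Gamma(M,\wedge^\bullet TM)\to C^\bullet(C^\infty(M),C^\infty(M))$ explicitly, verify it is a cochain map (trivially, since the differential on the source is zero, so what must be checked is that $D_\pi$ is a Hochschild cocycle for every multivector field $\pi$), and then prove it induces an isomorphism on cohomology. First I would make precise the formula $D_\pi(a_1,\dots,a_n)=\langle\pi,da_1\otimes\cdots\otimes da_n\rangle$: for a decomposable $k$-vector field $\pi=\xi_1\wedge\cdots\wedge\xi_k$ one sets $D_\pi(a_1,\dots,a_k)=\frac{1}{k!}\sum_{\sigma\in S_k}\mathrm{sgn}(\sigma)\,\xi_1(a_{\sigma(1)})\cdots\xi_k(a_{\sigma(k)})$, i.e.\ the antisymmetrization of the product of the directional derivatives; extend by $C^\infty(M)$-linearity. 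One checks directly from the definition \eqref{diff} of the Hochschild differential that $dD_\pi=0$: the terms $a_0D_\pi(a_1,\dots,a_n)$ and $(-1)^{n-1}D_\pi(a_0,\dots,a_{n-1})a_n$ together with the Leibniz-rule expansion of the inner terms $D_\pi(\dots,a_ia_{i+1},\dots)$ cancel in pairs because each $\xi_j$ is a derivation and because of the antisymmetrization. So $D$ lands in cocycles and descends to a map $H(D):\Gamma(M,\wedge^\bullet TM)\to H^\bullet(C^\infty(M),C^\infty(M))$ on cohomology (the source already being its own cohomology).

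The heart of the argument is showing $H(D)$ is bijective. I would do this first in the local/affine model $M=\mathbb{R}^m$ (or a coordinate chart), where $C^\infty(M)$ is replaced by a polynomial or formal-power-series algebra, and then globalize. In the local case the statement is the classical algebraic HKR isomorphism: for a smooth commutative algebra the Hochschild cohomology $HH^\bullet(A,A)$ is the exterior algebra $\wedge^\bullet_A\,\mathrm{Der}(A)$. The cleanest route is to exhibit an explicit homotopy: build a contracting homotopy on the Hochschild cochain complex (or rather its normalized version, using that differential operators vanishing on constants correspond to the normalized complex $\mathrm{Hom}(\bar A^{\otimes n},A)$) whose image is exactly the span of the antisymmetric cochains $D_\pi$. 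Equivalently, one uses the Koszul resolution of $A$ as an $A\otimes A$-module: for $A=\mathbb{R}[x^1,\dots,x^m]$ the bimodule $A$ has the finite Koszul resolution $A\otimes A\otimes\wedge^\bullet\mathbb{R}^m$, computing $HH^\bullet(A,A)=\wedge^\bullet_A\mathrm{Der}(A)$ on the nose, and one identifies the resulting cocycle representatives with the $D_\pi$. The antisymmetry built into $D$ is precisely what picks out the Koszul part and kills the exact "symmetric" pieces (as one already sees in the Lemma of Section~\ref{sec:cl}, where the symmetric part of $P_1$ was a coboundary).

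To pass from the local statement to arbitrary $M$, I would invoke a partition-of-unity/sheaf argument: Hochschild cohomology of $C^\infty(M)$ with differential-operator cochains is computed by a fine sheaf (differential operators are local), so the global cohomology is the global sections of the sheafified cohomology, which is $\Gamma(M,\wedge^\bullet TM)$ by the local computation; naturality of $D$ then forces $H(D)$ to be the comparison isomorphism. Finally I would check the last assertion, that under this isomorphism the Gerstenhaber bracket $[\cdot,\cdot]_G$ corresponds to the Schouten–Nijenhuis bracket $[\cdot,\cdot]_S$: it suffices to verify this on generators, i.e.\ on vector fields (degree $0$ in the shifted grading) where both brackets reduce to the ordinary Lie bracket of vector fields, and then use that both brackets are determined on all of $\wedge^\bullet TM$ by their graded Leibniz rule (property ii) of the Schouten bracket) together with the degree-$0$ case — a direct but routine check on $D_{\xi\wedge\eta}$ versus $[D_\xi,D_\eta]_G$.

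The main obstacle is the local HKR isomorphism itself: constructing the explicit contracting homotopy on the (normalized) Hochschild complex, or equivalently carrying out the Koszul-resolution computation, and verifying that the antisymmetrization map $D$ realizes the splitting. Everything else — the cocycle property, the globalization by locality of differential operators, the bracket compatibility — is comparatively mechanical once that core computation is in hand.
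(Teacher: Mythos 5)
The paper does not actually prove this theorem: it is quoted as a classical result from the cited Hochschild--Kostant--Rosenberg reference and used as a black box, so there is no in-text argument to compare yours against. Taken on its own terms, your outline is the standard proof strategy and is essentially sound: antisymmetrization defines $D_\pi$, the cocycle property follows from the derivation property of the $\xi_j$ and equation (\ref{diff}), the local statement is the algebraic HKR isomorphism via the Koszul resolution (or an explicit homotopy on the normalized complex), and locality of multidifferential operators lets you globalize with a partition of unity. You correctly identify the local computation as the crux.

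Two points deserve sharpening. First, the complex on the right-hand side must be the subcomplex of \emph{multidifferential} cochains $\mathcal{D}$ (or, in the analytic setting, continuous cochains): for the full algebraic Hochschild complex of $C^{\infty}(M)$ the statement as written is not known to hold, and your Koszul/locality arguments only apply to the differential-operator subcomplex. You are aware of this ("with differential-operator cochains"), but it should be stated as a hypothesis, not an aside, since the theorem as printed in the paper elides it. Second, your argument for the bracket compatibility --- check it on vector fields and extend by the graded Leibniz rule ii) --- implicitly requires that the HKR map also intertwines the cup product on Hochschild cohomology with the wedge product of multivector fields; without that, the Leibniz rule on the Schouten side has no counterpart to match on the Gerstenhaber side. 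This product compatibility is true and is part of the full HKR package, but it is an extra verification you must do (note that $D$ is \emph{not} a map of graded Lie algebras at the cochain level --- the compatibility holds only on cohomology, which is precisely why Kontsevich's $L_\infty$-machinery is needed downstream). With those two items filled in, the proposal is a complete and correct plan.
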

In order to introduce the formality theorem we need the notion of $L_{\infty}$-quasi isomorphism. Here we don't discuss the
 definition of $L_{\infty}$-algebras (see \cite{CI}).
\begin{definition}
Let $\mathfrak{g}_1$ and $\mathfrak{g}_2$ be two DGLA. By definition, an $\mathbf{L_{\infty}}$-\textbf{morphism} $f:\mathfrak{g}_1\rightarrow\mathfrak{g}_2$ is given by a sequence of maps
\begin{equation}
f_n:\mathfrak{g}_1^{\otimes n}\rightarrow \mathfrak{g}_2,\quad n\geq 1,
\end{equation} 
homogeneous of degree $1-n$ and such that the following conditions are satisfied:
\begin{enumerate}
	\item The morphism $f_n$ is graded antisymmetric, i.e. we have
\begin{equation}\nonumber
f_n(x_1,\dots, x_i,x_{i+1},\dots x_n)=-(-1)^{\vert x_i\vert\vert x_{i+1}\vert}f_n(x_1,\dots, x_{i+1},x_{i},\dots x_n)
\end{equation}
for all homogeneous $x_1,\dots, x_n$ of $\mathfrak{g}_1$. 
	\item We have $f_1\circ d=d\circ f_1$ i.e. the map $f_1$ is a morphism of complexes.
	\item $f_1$ is compatible with the brackets up to a homology given by $f_2$. In particular, $f_1$ induces a morphism of graded 
	Lie algebras from $H^{\bullet}(\mathfrak{g}_1)$ to $H^{\bullet}(\mathfrak{g}_2)$.
	\item More generally, for any homogeneous element $x_1,\dots, x_n$ of $\mathfrak{g}^{\bullet}$,
\begin{equation}
	\begin{split}
		\sum \pm & f_{q+1}([x_{i_1},\dots, x_{i_p}]_p,x_{j_1},\dots, x_{j_q})= \\
& \sum\pm \frac{1}{k!}[f_{n_1}(x_{i_{11}},\dots, x_{i_{1n_1}}),\dots,f_{n_k}(x_{i_{k1}},\dots, x_{i_{kn_k}})]
	\end{split}
\end{equation}
\end{enumerate} 
\end{definition}
Roughly, an $L_{\infty}$-morphism is a map between DGLA which is compatible with the brackets up to a given coherent system
of higher homotopies. 
\begin{definition}
An $\mathbf{L_{\infty}}$-\textbf{quasi isomorphism} is an $L_{\infty}$-morphism whose first components is a 
quasi-isomorphism.
\end{definition}
Kontsevich's Formality theorem can be finally stated as follows:
\begin{theorem}[Kontsevich \cite{Ko}]
There exists natural $L_{\infty}$-quasi isomorphism
\begin{equation}
K:\mathfrak{g}_S^{\bullet}(M)\rightarrow \mathfrak{g}_G^{\bullet}(C^{\infty}(M))
\end{equation}
The component $K_1$ of $K$ coincides with the quasi-isomorphism defined in the Hochschild-Kostant-Rosenberg Theorem \ref{hkr}.
\end{theorem}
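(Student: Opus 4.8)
The plan is to establish the theorem first for the local model $M=\mathbb{R}^d$, where $K$ can be written down by an explicit formula involving graphs and integrals over configuration spaces, and then to globalise by formal geometry; an alternative route would be Tamarkin's proof via the formality of the little-disks operad, but I will follow Kontsevich's original construction, which is the one whose first component is visibly the Hochschild--Kostant--Rosenberg map.

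\emph{The local construction.} For $M=\mathbb{R}^d$ I would define, for each $n\geq 1$, a map $K_n\colon\mathfrak{g}_S^\bullet(\mathbb{R}^d)^{\otimes n}\to\mathfrak{g}_G^\bullet(C^\infty(\mathbb{R}^d))$ of degree $1-n$ as a sum $K_n=\sum_\Gamma w_\Gamma\,B_\Gamma$ over \emph{admissible graphs}. An admissible graph $\Gamma$ of type $(n,m)$ is a directed graph with $n$ labelled ``aerial'' vertices, the $i$-th carrying $k_i$ outgoing edges so as to absorb a $k_i$-vector field $\gamma_i$, and $m$ linearly ordered ``terrestrial'' vertices with no outgoing edges, subject to having no self-loops and no parallel edges. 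The multidifferential operator $B_\Gamma(\gamma_1,\dots,\gamma_n)$ is built in the usual way: the coefficients of $\gamma_i$ are placed at the $i$-th aerial vertex, an edge from $u$ to $v$ acts as a partial derivative on the tensor at $v$ whose index is contracted with a slot of $u$, and the $m$ arguments of the operator are fed into the terrestrial vertices. The weight is $w_\Gamma=\int_{\overline{C}_{n,m}}\bigwedge_{e\in E(\Gamma)}d\phi_e$, where $C_{n,m}$ is the configuration space of $n$ points in the upper half-plane $\mathbb{H}$ and $m$ ordered points on $\mathbb{R}$ modulo rescalings and horizontal translations, $\overline{C}_{n,m}$ is its Fulton--MacPherson--type compactification — which one must check is a compact manifold with corners of dimension $2n+m-2$, to which the harmonic angle propagator $\phi_e$ extends smoothly so that $w_\Gamma$ is finite. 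Since $w_\Gamma$ vanishes unless $\#E(\Gamma)=2n+m-2$, the operator $B_\Gamma$ then lies in $\mathfrak{g}_G^{m-1}$ and $K_n$ has degree $1-n$, and graded antisymmetry follows from invariance under relabelling the aerial vertices. For $n=1$ the only admissible graphs send each of the $m$ edges from the single aerial vertex to a distinct terrestrial vertex, and $K_1$ then coincides with the map of Theorem~\ref{hkr}.

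\emph{The $L_\infty$ relations.} The heart of the proof is checking the $L_\infty$-morphism equations for the $K_n$. For a fixed type $(n,m)$ one sums, over admissible graphs $\Gamma$ with $\#E(\Gamma)=2n+m-3$, the Stokes identity $0=\int_{\overline{C}_{n,m}}d\bigl(\bigwedge_e d\phi_e\bigr)=\sum_{\text{codim }1}\int\bigwedge_e d\phi_e$ over the codimension-one boundary strata. These strata are of two kinds: (S) a subset $S$ of aerial points with $|S|\geq 2$ collapses in the interior of $\mathbb{H}$; (A) a subset of aerial points together with a consecutive block of terrestrial points collapses to a point of $\mathbb{R}$. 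Matching the induced boundary integrals against the $B_\Gamma$, the (S)-strata with $|S|=2$ reproduce the terms $\tfrac{1}{2}\sum\pm K_{n-1}([\gamma_i,\gamma_j]_S,\dots)$, the (A)-strata reproduce the Gerstenhaber-bracket terms $\tfrac{1}{2}\sum_{I\sqcup J}\pm[K_{|I|}(\gamma_I),K_{|J|}(\gamma_J)]_G$ together with the Hochschild differential term $dK_n$ (the sub-case in which only two terrestrial points come together, so that the collapsing bubble is the product itself), and every remaining stratum — the (S)-strata with $|S|\geq 3$ and the degenerate (A)-pieces — must contribute zero. Establishing this last vanishing is the crux: it requires Kontsevich's vanishing lemmas, namely that the configuration-space integral attached to a bulk collision of $\geq 3$ aerial points vanishes (by the extra rotational symmetry of the bulk model, equivalently a dimension count), and that the integral attached to a collision at $\mathbb{R}$ involving $\geq 2$ aerial but no terrestrial vertices vanishes as well. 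I expect this enumeration of boundary strata, the two vanishing lemmas, and the careful tracking of signs to be the main obstacle.

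\emph{Globalisation and consequences.} To pass from $\mathbb{R}^d$ to an arbitrary $M$ I would use formal geometry: fix a torsion-free connection on $TM$, form the bundle of $\infty$-jets of local coordinate systems with its flat Fedosov connection, and resolve both $\mathfrak{g}_S^\bullet(M)$ and $\mathfrak{g}_G^\bullet(C^\infty(M))$ by their fibrewise-formal counterparts. On each fibre the local $L_\infty$-quasi-isomorphism applies; its equivariance under linear changes of coordinates (and its compatibility, up to homotopy, with formal diffeomorphisms) lets one twist it by the Fedosov differential and descend it to $M$, producing the global $K$ with $K_1$ the global HKR map and with naturality under diffeomorphisms built in. Finally, Theorem~\ref{thm:k} follows from the general principle that an $L_\infty$-quasi-isomorphism induces a bijection on gauge-equivalence classes of solutions of the Maurer--Cartan equation, applied to equations \eqref{MC-eq-calV} and \eqref{MC-eq-calD}.
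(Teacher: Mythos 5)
The paper does not prove this theorem: it is stated and attributed to \cite{Ko}, and the lecture notes only use it (together with the $L_\infty$-quasi-isomorphism theorem) to deduce Theorem \ref{thm:k}. So there is no in-paper argument to compare yours against; what you have written is a roadmap of Kontsevich's original proof, and as a roadmap it is accurate: the local formula $K_n=\sum_\Gamma w_\Gamma B_\Gamma$ over admissible graphs, the weights as integrals of products of angle forms over the compactified configuration spaces $\overline{C}_{n,m}$ of dimension $2n+m-2$, the verification of the $L_\infty$ relations by applying Stokes' theorem to graphs with $2n+m-3$ edges and sorting the codimension-one boundary strata into Schouten terms, Gerstenhaber/Hochschild terms, and strata that must vanish, the identification of $K_1$ with the map of Theorem \ref{hkr}, and the globalisation by formal geometry feeding into \eqref{MC-eq-calV} and \eqref{MC-eq-calD}. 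These are exactly the right ingredients in the right order.

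That said, be clear that this is an outline and not yet a proof: every genuinely hard step is named but deferred, and in one place the justification you offer is wrong. The vanishing of the strata where $|S|\geq 3$ aerial points collide in the bulk is \emph{not} a dimension count and is not forced by rotational symmetry --- the collapsed factor is $\mathrm{Conf}_{|S|}(\mathbb{C})$ modulo translations and dilations only, of dimension $2|S|-3$, which exactly matches the number of internal edges required, so the integral is of a top-degree form over a nonempty space and could a priori be anything. Its vanishing is Kontsevich's delicate Lemma 6.6, proved by a separate argument (the ``trick with logarithms''/parity argument), and similarly the vanishing of the bad boundary strata along $\mathbb{R}$ requires the specific property that the angle form degenerates when its source approaches the real line. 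Other items you flag but do not carry out --- that $\overline{C}_{n,m}$ is a compact manifold with corners to which the angle forms extend smoothly, the sign bookkeeping in the Stokes identity, and the equivariance needed for the Fedosov-type globalisation --- are each substantial. So: correct strategy, faithful to \cite{Ko}, but the crux points are asserted rather than established, and one of the two vanishing claims is mis-justified as stated.
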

Kontsevich's formality map  proves the one-to-one correspondence between the equivalence class of formal Poisson structures on $M$ and the isomorphism class of star products on $C^{\infty}(M)$. Remember from Section \ref{gla} that for any DGLA $(\mathfrak{g}, [\cdot,\cdot], d)$ we defined the set of solutions of the Maurer-Cartan equation
$$
MC(\mathfrak{g})=\lbrace a\in \hbar\mathfrak{g}^1\llbracket\hbar\rrbracket\;\ \vert \; da+\frac{1}{2}[a,a]=0\rbrace
$$
The group $exp(\hbar \mathfrak{g}^0\llbracket\hbar\rrbracket )$ acts on $MC(\mathfrak{g})$ by (\ref{gac}). Put
$$
M(\mathfrak{g})=MC(\mathfrak{g})/exp(\hbar \mathfrak{g}^0\llbracket\hbar\rrbracket )
$$
Now we introduce a new
\begin{theorem}[$L_{\infty}$-quasi isomorphism theorem]
Given a $L_{\infty}$-quasi isomorphism $f:\mathfrak{g}_1\rightarrow\mathfrak{g}_2$, this induces an isomorphism
$$
M(\mathfrak{g}_1)\simeq M(\mathfrak{g}_2)
$$
\end{theorem}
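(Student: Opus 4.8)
The plan is to construct the isomorphism $M(\mathfrak{g}_1)\simeq M(\mathfrak{g}_2)$ explicitly from the component maps $f_n$ of the $L_\infty$-morphism, and then verify it is well-defined on gauge-equivalence classes and bijective. First I would define a map $\Phi\colon MC(\mathfrak{g}_1)\to MC(\mathfrak{g}_2)$ by the usual formula of an $L_\infty$-morphism applied to a Maurer--Cartan element, namely
\begin{equation}\nonumber
\Phi(a)=\sum_{n=1}^{\infty}\frac{1}{n!}f_n(a,\dots,a).
\end{equation}
Here one uses crucially that $a\in\hbar\mathfrak{g}_1^1\llbracket\hbar\rrbracket$ has no constant term, so the series converges $\hbar$-adically: each coefficient of $\hbar^k$ is a finite sum. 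The first step is to check that $\Phi(a)$ indeed satisfies the Maurer--Cartan equation of $\mathfrak{g}_2$; this is a formal consequence of condition (4) in the definition of an $L_\infty$-morphism, obtained by summing the higher-homotopy identities with the MC element plugged into every slot. This is a standard but slightly involved bookkeeping computation with signs and multinomial coefficients, and I would present it as a direct expansion rather than deriving it from scratch.

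Next I would show that $\Phi$ descends to the quotients, i.e. that if $a'=\exp(\hbar g)\cdot a$ for some $g\in\mathfrak{g}_1^0\llbracket\hbar\rrbracket$ then $\Phi(a')$ is gauge-equivalent to $\Phi(a)$ in $\mathfrak{g}_2$. The clean way to do this is to pass to the infinitesimal (one-parameter) picture: write a gauge transformation as the flow of the vector field $a\mapsto [g,a]-dg$ on $MC(\mathfrak{g}_1)$, and compute the pushforward of this vector field under $\Phi$ using again the $L_\infty$-identities. One finds that $\Phi$ intertwines the gauge action on $\mathfrak{g}_1$ with the gauge action on $\mathfrak{g}_2$ generated by a $g$-dependent (and $a$-dependent) degree-zero element, so integral curves map to integral curves and hence gauge orbits map to gauge orbits. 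This yields a well-defined map $M(\mathfrak{g}_1)\to M(\mathfrak{g}_2)$, and by construction it depends only on the $L_\infty$-morphism $f$.

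Finally I would establish bijectivity. Here the hypothesis that $f$ is a quasi-\emph{isomorphism}, i.e. that $f_1$ induces an isomorphism on cohomology, is what does the work. The argument is an induction on the order in $\hbar$: suppose two classes in $M(\mathfrak{g}_1)$ have the same image; truncate mod $\hbar^{N+1}$, assume they agree mod $\hbar^N$ after a gauge transformation, and show the obstruction to matching them mod $\hbar^{N+1}$ lives in $H^1(\mathfrak{g}_1)$ and is killed precisely because $H(f_1)$ is injective there; similarly, surjectivity follows by lifting an MC element of $\mathfrak{g}_2$ order by order, the obstruction to the lift being controlled by surjectivity of $H(f_1)$ in degree $1$ together with an argument in degree $2$ to absorb the Maurer--Cartan defect. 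A convenient packaging of both directions is the Goldman--Millson / Schlessinger--Stasheff formalism, in which $M(\mathfrak{g})$ is the ``deformation functor'' associated to $\mathfrak{g}$ and an $L_\infty$-quasi-isomorphism is known to induce an isomorphism of such functors; I would at least sketch the obstruction-theoretic induction so the statement is self-contained.

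The main obstacle I anticipate is the well-definedness on gauge classes: keeping track of the signs and the combinatorics in the $L_\infty$-identities when one plugs in both the MC element $a$ (in many slots) and the gauge parameter $g$ (in one slot), and extracting from them the precise degree-zero element of $\mathfrak{g}_2$ that generates the transported gauge transformation, is the step where it is easiest to make an error. The convergence issues are genuinely harmless here because everything is a formal power series in $\hbar$ with elements of positive $\hbar$-degree, so each identity is really a family of identities in fixed $\hbar$-degree involving only finitely many terms.
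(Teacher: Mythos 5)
The paper does not actually prove this theorem: it is stated bare and immediately applied to deduce Theorem \ref{thm:k}, with all details deferred to the references. So there is no proof of record to compare against; what you have written is the standard argument from the literature (Kontsevich, Goldman--Millson/Schlessinger--Stasheff, and the notes \cite{CI} the paper is based on), and its architecture is correct. The map $\Phi(a)=\sum_{n\ge 1}\frac{1}{n!}f_n(a,\dots,a)$ lands in degree $1$ since $f_n$ has degree $1-n$, is well defined $\hbar$-adically because $a\in\hbar\mathfrak{g}_1^1\llbracket\hbar\rrbracket$, preserves the Maurer--Cartan equation by the $L_\infty$-relations, intertwines the gauge actions, and is a bijection on the quotients by an induction on the $\hbar$-order driven by the hypothesis that $H(f_1)$ is an isomorphism.

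Two refinements are worth making explicit. First, in the gauge step the transported degree-zero generator is $\sum_{n\ge 0}\frac{1}{n!}f_{n+1}(g,a,\dots,a)$, which depends on $a$ as well as $g$; the clean statement is therefore that $\Phi$ sends homotopy-equivalent Maurer--Cartan elements to homotopy-equivalent ones (your flow picture, or equivalently paths in $MC\bigl(\mathfrak{g}\otimes\Omega^{\bullet}([0,1])\bigr)$), combined with the fact that for elements of $\hbar\mathfrak{g}^1\llbracket\hbar\rrbracket$ homotopy equivalence and gauge equivalence coincide. Second, in the injectivity half of the induction you invoke only injectivity of $H^1(f_1)$; to lift the gauge parameter of $\mathfrak{g}_2$ back to $\mathfrak{g}_1$ order by order you also need surjectivity of $H^0(f_1)$ (and, for the surjectivity half, injectivity of $H^2(f_1)$ to kill the obstruction). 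Since $f_1$ is a quasi-isomorphism in every degree this is all available, but the induction uses the isomorphism on $H^0$, $H^1$ and $H^2$, not on $H^1$ alone.
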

Consider the DGLA's of multidifferential operators and multivector fields. From Section \ref{gla} follows immediately that  $M(\mathfrak{g}_S^{\bullet}(M))$ is the set of equivalence classes of formal Poisson structures $\pi_{\hbar}$  and $M(\mathfrak{g}_G^{\bullet}(C^{\infty}(M)))$ is the set of isomorphism classes of star products on $C^{\infty}(M)$. Using the Kontsevich theorem and the $L_{\infty}$-quasi isomorphism theorem follows that $M(\mathfrak{g}_S^{\bullet}(M))\simeq M(\mathfrak{g}_G^{\bullet}(C^{\infty}(M)))$. This proves Theorem \ref{thm:k}.

\bibliographystyle{plain}
\bibliography{/Users/macbook/Documents/Work/Research/Bibliography/references1}
 
\end{document}